\newtheorem{definition}{Definition}
\newtheorem{lemma}{Lemma}
\newtheorem{example}{Example}
\newtheorem{proposition}{Proposition}
\newcommand{\coloneq}{\mbox{ :- }}
\newcommand{\val}{\simeq\!\!}
\DeclareSymbolFont{bbold}{U}{bbold}{m}{n}
\DeclareSymbolFontAlphabet{\mymathbb}{bbold}
\begin{document} 

\title{The Magic of Logical Inference in Probabilistic Programming}

\author[B.~Gutmann, I.~Thon, A.~Kimmig, M.~Bruynooghe, L.~De~Raedt]{Bernd Gutmann,
            Ingo Thon, Angelika Kimmig, Maurice Bruynooghe and Luc De~Raedt \\
            Department of Computer Science, Katholieke Universiteit Leuven,\\
            Celestijnenlaan 200A - bus 2402, 3001 Heverlee, Belgium\\
            \{firstname.lastname\}@cs.kuleuven.be}

 \maketitle

\begin{abstract} 
  Today, many different probabilistic programming languages exist and
  even more inference mechanisms for these languages.  Still, most
  logic programming based languages use backward reasoning based on
  SLD resolution for inference. While these methods are typically
  computationally efficient, they often can neither handle infinite
  and/or continuous distributions, nor evidence.  To overcome these
  limitations, we introduce distributional clauses, a variation and
  extension of Sato's distribution semantics. We also contribute a
  novel approximate inference method that integrates forward reasoning
  with importance sampling, a well-known technique for probabilistic
  inference. To achieve efficiency, we integrate two logic programming
  techniques to direct forward sampling. Magic sets are used to focus
  on relevant parts of the program, while the integration of backward
  reasoning allows one to identify and avoid regions of the sample
  space that are inconsistent with the evidence.
\end{abstract}

\section{Introduction} 
The advent of statistical relational learning
\cite{Getoor07,DeRaedtAPRIL08} and probabilistic programming
\cite{NIPSWorkshop} has resulted in a vast number of different languages and systems such as
PRISM~\cite{SatoKameya:01}, ICL~\cite{Poole08},
ProbLog~\cite{DeRaedt07-IJCAIa}, Dyna~\cite{Eisner05},
BLPs~\cite{Kersting08}, CLP($\mathcal{BN}$)~\cite{clpbn},
BLOG~\cite{Milch05}, Church~\cite{Goodman08}, IBAL~\cite{Pfeffer01},
and MLNs~\cite{Richardson:06}. While inference in these languages
generally involves evaluating the probability distribution defined by
the model, often conditioned on evidence in the form of known truth
values for some atoms, this diversity of systems has led to a variety
of inference approaches.  Languages such as IBAL, BLPs, MLNs and
CLP($\mathcal{BN}$) combine knowledge-based model construction to
generate a graphical model with standard inference techniques for such
models.  Some probabilistic programming languages, for instance BLOG
and Church, use sampling for approximate inference in generative
models, that is, they estimate probabilities from a large number of
randomly generated program traces. Finally, probabilistic logic
programming frameworks such as ICL, PRISM and ProbLog, combine
SLD-resolution with probability calculations.

So far, the second approach based on sampling has received little
attention in logic programming based systems.  In this paper, we
investigate the integration of sampling-based approaches into
probabilistic logic programming frameworks to broaden the
applicability of these.  Particularly relevant in this regard are the
ability of Church and BLOG to sample from continuous distributions and
to answer conditional queries of the form $p(q |e )$ where $e$ is the
evidence.  To accommodate (continuous and discrete) distributions, we
introduce \emph{distributional clauses}, which define random variables
together with their associated distributions, conditional upon logical
predicates.  Random variables can be passed around in the logic
program and the outcome of a random variable can be compared with
other values by means of special built-ins.  To formally establish the
semantics of this new construct, we show that these random variables
define a basic distribution over facts (using the comparison
built-ins) as required in Sato's distribution
semantics~\cite{Sato:95}, and thus induces a distribution over least
Herbrand models of the program. This contrasts with previous instances
of the distribution semantics in that we no longer enumerate the
probabilities of alternatives, but instead use arbitrary densities and
distributions. 

From a logic programming perspective, BLOG~\cite{milch:aistats2005}
and related approaches perform \emph{forward reasoning}, that is, the
samples needed for probability estimation are generated starting from
known facts and deriving additional facts, thus generating a
\emph{possible world}. PRISM and related approaches follow the
opposite approach of \emph{backward reasoning}, where inference starts
from a query and follows a chain of rules backwards to the basic
facts, thus generating \emph{proofs}. This difference is one of the
reasons for using sampling in the first approach: exact forward
inference would require that all possible worlds be generated, which
is infeasible in most cases.  Based on this observation, we contribute
a new inference method for probabilistic logic programming that
combines sampling-based inference techniques with forward
reasoning. On the probabilistic side, the approach uses rejection
sampling~\cite{KollerFriedman09}, a well-known sampling technique that
rejects samples that are inconsistent with the evidence.  On the logic
programming side, we adapt the \emph{magic set}
technique~\cite{bancilhon} towards the probabilistic setting, thereby
combining the advantages of forward and backward
reasoning. Furthermore, the inference algorithm is improved along the
lines of the \emph{SampleSearch}
algorithm~\cite{Gogate09samplesearch:importance}, which avoids choices
leading to a sample that cannot be used in the probability estimation
due to inconsistency with the evidence. We realize this using a
heuristic based on backward reasoning with limited proof length, the
benefit of which is experimentally confirmed.  This novel approach to
inference creates a number of new possibilities for applications of
probabilistic logic programming systems, including continuous
distributions and Bayesian inference.

This paper is organized as follows: we start by reviewing the basic
concepts in Section~\ref{sec:prelim}.  Section~\ref{sec:semantics}
introduces the new language and its semantics,
Section~\ref{sec:algorithms} a novel forward sampling algorithm for
probabilistic logic programs. Before concluding, we evaluate our
approach in Section~\ref{sec:experiments}.

\section{Preliminaries}
\label{sec:prelim}
\subsection{Probabilistic Inference} 
\label{sec:probinf}

A discrete probabilistic model defines a probability
distribution~$p(\cdot)$ over a set~$\Omega$ of basic outcomes, that
is, value assignments to the model's random variables. This
distribution can then be used to evaluate a conditional probability
distribution $p(q|e) =\frac{p(q\wedge e)}{p(e)}$, also called
\emph{target distribution}.  Here, $q$ is a query involving random
variables, and $e$ is the \emph{evidence}, that is, a partial value
assignment of the random variables\footnote{If $e$ contains
  assignments to continuous variables, $p(e)$ is zero. Hence, evidence
  on continuous values has to be defined via a probability density
  function, also called a sensor model.}.  Evaluating this target
distribution is called \emph{probabilistic
  inference}~\cite{KollerFriedman09}. In probabilistic logic
programming, random variables often correspond to ground atoms,
and~$p(\cdot)$ thus defines a distribution over truth value
assignments, as we will see in more detail in Sec.~\ref{sec:ds} (but
see also~\citeNP{NIPSWorkshop}). Probabilistic inference then asks for
the probability of a logical query being true given truth value
assignments for a number of such ground atoms.

In general, the probability~$p(\cdot)$ of a query~$q$ is in the
discrete case the sum over those outcomes $\omega\in \Omega$ that are
consistent with the query.  In the continuous case, the sum is
replaced by an (multidimensional) integral and the
distribution~$p(\cdot)$ by a (product of) densities
$\mathbf{F}(\cdot)$ That is,
\begin{equation}\label{eq:exact}
p(q)= \sum_{\omega\in \Omega} p(\omega) \mymathbb{1}_{q}(\omega), \quad \quad and\quad \quad p(q)=\idotsint\limits_\Omega \mymathbb{1}_q(\omega) d\mathbf{F}(\omega)
\end{equation}
where $\mymathbb{1}_{q}(\omega)=1$ if $\omega\models q$ and $0$
otherwise.  As common (e.g.~\cite{wasserman04allof}) we will use for
convenience the notation $\int xdF(x)$ as unifying notation for both
discrete and continuous distributions.

As $\Omega$ is often very large or even infinite, exact inference
based on the summation in~\eqref{eq:exact} quickly becomes infeasible,
and inference has to resort to approximation techniques based on
\emph{samples}, that is, randomly drawn outcomes $\omega\in
\Omega$. Given a large set of such samples $\{s_1,\ldots,s_N\}$ drawn
from~$p(\cdot)$, the probability~$p(q)$ can be estimated as the
fraction of samples where $q$ is true. If samples are instead drawn
from the target distribution $p(\cdot|e)$, the latter can directly be
estimated as
\begin{equation*}
  \hat{p}(q|e) := \frac{1}{N}\sum_{i=1}^N \mymathbb{1}_q(s_i) \enspace.
\end{equation*}

However, sampling from~$p(\cdot|e)$ is often highly inefficient or
infeasible in practice, as the evidence needs to be taken into
account. For instance, if one would use the standard definition of
conditional probability to generate samples from $p(\cdot)$, all
samples that are not consistent with the evidence do not contribute to
the estimate and would thus have to be discarded or, in sampling
terminology, \emph{rejected}.

More advanced sampling methods therefore often resort to a so-called
\emph{proposal distribution} which allows for easier sampling. The
error introduced by this simplification then needs to be accounted for
when generating the estimate from the set of samples.  An example for
such a method is \emph{importance sampling}, where each sample~$s_i$
has an associated \emph{weight}~$w_i$. Samples are drawn from an
\emph{importance distribution}~$\pi(\cdot |e)$, and weights are
defined as $w_i=\frac{p(s_i|e)}{\pi(s_i|e)}$. The true target
distribution can then be estimated as
\begin{equation*}
  \hat{p}(q|e) = \frac{1}{W} \sum_{i=1}^Nw_i\cdot \mymathbb{1}_q(s_i)  
\end{equation*}
where $W=\sum_i w_i$ is a normalization constant. The simplest
instance of this algorithm is \emph{rejection sampling} as already
sketched above, where the samples are drawn from the prior
distribution~$p(\cdot)$ and weights are $1$ for those samples
consistent with the evidence, and $0$ for the others. Especially for
evidence with low probability, rejection sampling suffers from a very
high rejection rate, that is, many samples are generated, but do not
contribute to the final estimate. This is known as the \emph{rejection
  problem}. One way to address this problem is \emph{likelihood
  weighted sampling}, which dynamically adapts the proposal
distribution during sampling to avoid choosing values for random
variables that cause the sample to become inconsistent with the
evidence. Again, this requires corresponding modifications of the
associated weights in order to produce correct estimates.

\subsection{Logical Inference} 
\label{sec:forward}

A (definite) clause is an expression of the form $\mathtt{h\coloneq
  b_1, \ldots ,b_n}$, where $\mathtt{h}$ is called head and
$\mathtt{b_1,\ldots,b_n}$ is the body.  A program consists of a set of
clauses and its semantics is given by its least Herbrand model.  There
are at least two ways of using a definite clause in a logical
derivation.  First, there is \emph{backward chaining}, which states
that to prove a goal $\mathtt{h}$ with the clause it suffices to prove
$\mathtt{b_1, \ldots,b_n}$; second, there is \emph{forward chaining},
which starts from a set of known facts $\mathtt{b_1, \ldots ,b_n}$ and
the clause and concludes that $\mathtt{h}$ also
holds~(cf.~\cite{nilsson:book}).  Prolog employs backward chaining
(SLD-resolution) to answer queries.  SLD-resolution is very efficient
both in terms of time and space. However, similar subgoals may be
derived multiple times if the query contains recursive
calls. Moreover, SLD-resolution is not guaranteed to always terminate
(when searching depth-first).  Using forward reasoning, on the other
hand, one starts with what is known and employs the immediate
consequence operator $T_P$ until a fixpoint is reached. This fixpoint
is identical to the least Herbrand model.

\begin{definition}[$T_P$ operator]
  \label{def:tpop}
  Let $P$ be a logic program containing a set of definite clauses and
  $ground(P)$ the set of all ground instances of these clauses.
  Starting from a set of ground facts $S$ the $T_P$ operator returns
  \begin{equation*}
    T_P(S) = \{ \mathtt{h} \mid \mathtt{h \coloneq b_1,\ldots,b_n} \in
    ground(P) \mbox{ and } \{ \mathtt{b_1}, \ldots , \mathtt{b_n}\}
    \subseteq S\}
  \end{equation*}
\end{definition}

\subsection{Distribution Semantics}
\label{sec:ds}

Sato's distribution semantics~\cite{Sato:95} extends logic programming
to the probabilistic setting by choosing truth values of basic facts
randomly.  The core of this semantics lies in splitting the logic
program into a set $F$ of \emph{facts} and a set $R$ of
\emph{rules}. Given a probability distribution $P_F$ over the facts,
the rules then allow one to extend $P_F$ into a distribution over
least Herbrand models of the logic program. Such a Herbrand model is
called a \emph{possible world}.

More precisely, it is assumed that $DB = F\cup R$ is ground and
denumerable, and that no atom in $F$ unifies with the head of a rule
in $R$. Each truth value assignment to $F$ gives rise to a unique
least Herbrand model of $DB$. Thus, a probability distribution $P_F$
over $F$ can directly be extended into a distribution $P_{DB}$ over
these models. Furthermore, Sato shows that, given an enumeration
$f_1,f_2,\ldots$ of facts in $F$, $P_F$ can be constructed from a
series of finite distributions $P_F^{(n)}(f_1=x_1,\ldots, f_n=x_n)$
provided that the series fulfills the so-called compatibility
condition, that is,
\begin{equation}
  \label{eq:compat}
  P_F^{(n)}(f_1=x_1,\ldots, f_n=x_n) =
  \sum_{x_{n+1}}P_F^{(n+1)}(f_1=x_1,\ldots, f_{n+1}=x_{n+1})
\end{equation}

\section{Syntax and Semantics}
\label{sec:semantics} 

Sato's distribution semantics, as summarized in Sec.~\ref{sec:ds},
provides the basis for most probabilistic logic programming languages
including PRISM~\cite{SatoKameya:01}, ICL~\cite{Poole08},
CP-logic~\cite{Vennekens09} and ProbLog~\cite{DeRaedt07-IJCAIa}.  The
precise way of defining the basic distribution $P_F$ differs among
languages, though the theoretical foundations are essentially the
same. The most basic instance of the distribution semantics, employed
by ProbLog, uses so-called \emph{probabilistic facts}.  Each ground
instance of a \emph{probabilistic fact} directly corresponds to an
independent random variable that takes either the value ``true'' or
``false''. These probabilistic facts can also be seen as binary
switches, cf.~\cite{Sato:95}, which again can be extended to multi-ary
switches or choices as used by PRISM and ICL. For switches, at most
one of the probabilistic facts belonging to the switch is ``true''
according to the specified distribution. Finally, in CP-logic, such
choices are used in the head of rules leading to the so-called
\emph{annotated disjunction}.

Hybrid ProbLog~\cite{gutmann10ilp} extends the distribution semantics
with continuous distributions.  To allow for exact inference, Hybrid
ProbLog imposes severe restrictions on the distributions and their
further use in the program.  Two sampled values, for instance, cannot
be compared against each other. Only comparisons that involve one
sampled value and one number constant are allowed. Sampled values may
not be used in arithmetic expressions or as parameters for other
distributions, for instance, it is not possible to sample a value and
use it as the mean of a Gaussian distribution. It is also not possible
to reason over an unknown number of objects as BLOG~\cite{Milch05}
does, though this is the case mainly for algorithmic reasons.

Here, we alleviate these restrictions by defining the basic
distribution $P_F$ over probabilistic facts based on both discrete and
continuous random variables.  We use a three-step approach to define
this distribution. First, we introduce explicit random variables and
corresponding distributions over their domains, both denoted by
terms. Second, we use a mapping from these terms to terms denoting
(sampled) outcomes, which, then, are used to define the basic
distribution $P_F$ on the level of probabilistic facts. For instance,
assume that an urn contains an unknown number of balls where the
number is drawn from a Poisson distribution and we say
that this urn contains many balls if it contains at least $10$
balls. We introduce a random variable $\mathtt{number}$, and we define
$\mathtt{many} \coloneq \mathtt{dist\_gt(\val(number), 9).}$ Here,
$\val(\mathtt{number})$ is the Herbrand term denoting the sampled
value of $\mathtt{number}$, and $\mathtt{dist\_gt(\val(number), 9)}$
is a probabilistic fact whose probability of being true is the
expectation that this value is actually greater than $9$. This
probability then carries over to the derived atom $\mathtt{many}$ as
well.  We will elaborate on the details in the following.

\subsection{Syntax}
\label{sec:syntax}

In a logic program, following Sato, we distinguish between
probabilistic facts, which are used to define the basic distribution,
and rules, which are used to derive additional atoms.\footnote{A rule
  can have an empty body, in which case it represents a deterministic
  fact.} Probabilistic facts are not allowed to unify with any rule
head.  The distribution over facts is based on random variables, whose
distributions we define through so called distributional clauses.

\begin{definition}[Distributional clause]
  \label{def:disclause}
  A \emph{distributional clause} is a definite clause with an atom
  $\mathtt{h} \sim \mathcal{D}$ in the head where $\sim$ is a binary
  predicate used in infix notation.
\end{definition}

For each ground instance $(\mathtt{h}\sim \mathcal{D} \coloneq
\mathtt{b_1,\ldots,b_n})\theta$ with $\theta$ being a substitution
over the Herbrand universe of the logic program, the distributional
clause defines a random variable $\mathtt{h}\theta$ and an associated
distribution $\mathcal{D}\theta$. In fact, the distribution is only
defined when $(\mathtt{b_1,\ldots,b_n})\theta$ is true in the
semantics of the logic program.  These random variables are terms of
the Herbrand universe and can be used as any other term in the logic
program.  Furthermore, a term $\val(d)$ constructed from the reserved
functor $\val/1$ represents the outcome of the random variable
$d$. These functors can be used inside calls to special predicates in
$dist\_rel =\{ dist\_eq/2, dist\_lt/2, dist\_leq/2, dist\_gt/2,
dist\_geq/2\}$.  We assume that there is a fact for each of the ground
instances of these predicate calls. These facts are the
\emph{probabilistic facts} of Sato's distribution semantics. Note that
the set of probabilistic facts is enumerable as the Herbrand universe
of the program is enumerable. A term $\val(d)$ links the random
variable $d$ with its outcome. The probabilistic facts compare the
outcome of a random variable with a constant or the outcome of another
random variable and succeed or fail according to the probability
distribution(s) of the random variable(s).

\begin{example}[Distributional clauses]
  \label{ex:probrules}

  \begin{align}
    \mathtt{nballs \sim poisson(6)}. 
         & \label{lbl:ex1}\\
    \mathtt{color(B)\sim\ [0.7:b, 0.3:g]} 
         &\coloneq \mathtt{between(1,\val(nballs),B).} \label{lbl:ex2}\\
    \mathtt{ diameter(B,MD)\sim gamma(MD/20,20) }
         &\mathtt{\coloneq  between(1,\val(nballs),B),}\nonumber \\
         &\quad\ \mathtt{mean\_diameter(\val(color(B)),MD).}\label{lbl:ex3}
  \end{align}

  The defined distributions depend on the following logical clauses:
  \begin{align*}
    \mathtt{mean\_diameter(C,5) } & \mathtt{\coloneq dist\_eq(C,b).}\\
    \mathtt{mean\_diameter(C,10) } & \mathtt{\coloneq dist\_eq(C,g). }\\
    \mathtt{ between(I,J,I)} & \mathtt{\coloneq dist\_leq(I,J).} \\
    \mathtt{between(I, J, K)} & \mathtt{\coloneq dist\_lt(I,J), \ I1\ is\ I + 1, between(I1,J,K)}.
  \end{align*}

  The distributional clause~\eqref{lbl:ex1} models the number of balls
  as a Poisson distribution with mean 6. The distributional
  clause~\eqref{lbl:ex2} models a discrete distribution for the random
  variable color(B). With probability 0.7 the ball is blue and green
  otherwise. Note that the distribution is defined only for the values
  B for which $between(1, \val(nballs), B)$ succeeds. Execution of
  calls to the latter give rise to calls to probabilistic facts that
  are instances of $dist\_leq(I,\val(nballs))$ and
  $dist\_lt(I,\val(nballs))$.  Similarly, the distributional
  clause~\eqref{lbl:ex3} defines a gamma distribution that is also
  conditionally defined. Note that the conditions in the distribution
  depend on calls of the form $mean\_diameter(\val(color(n)), MD)$
  with $n$ a value returned by between/3. Execution of this call
  finally leads to calls $dist\_eq(\val(color(n)),b)$ and
  $dist\_eq(\val(color(n)),g)$.
\end{example} 

It looks feasible, to allow $\val(d)$ terms everywhere and to have a
simple program analysis insert the special predicates in the
appropriate places by replacing $</2$, $>/2$, $\leq/2$, $\geq/2$
predicates by $dist\_rel/2$ facts.  Though extending unification is a
bit harder: as long as a $\val(h)$ term is unified with a free
variable, standard unification can be performed; only when the other
term is bound an extension is required.  In this paper, we assume that
the special predicates $\mathtt{dist\_eq/2}$, $\mathtt{dist\_lt/2}$,
$\mathtt{dist\_leq/2}$, $\mathtt{dist\_gt/2}$, and
$\mathtt{dist\_geq/2}$ are used whenever the outcome of a random
variable need to be compared with another value and that it is safe to
use standard unification whenever a $\mathtt{\val(h)}$ term is used in
another predicate.

For the basic distribution on facts to be well-defined, a program has
to fulfill a set of validity criteria that have to be enforced by the
programmer.

\begin{definition}[Valid program]
  \label{def:validprog}
  A program $P$ is called \emph{valid} if: 

 \begin{description}
  \item[(V1)] In the relation $\mathtt{h} \sim \mathcal{D}$ that holds
    in the least fixpoint of a program, there is a functional
    dependency from $\mathtt{h}$ to $ \mathcal{D}$, so there is a
    unique ground distribution $ \mathcal{D}$ for each ground random
    variable $\mathtt{h}$.

  \item[(V2)] The program is \emph{distribution-stratified}, that is,
    there exists a function $rank(\cdot)$ that maps ground atoms to
   $\mathbb{N}$ and that satisfies the following properties: (1)
    for each ground instance of a distribution clause $\mathtt{h}\sim
    \mathcal{D} \coloneq \mathtt{b_1, \ldots b_n}$ holds
    $rank(\mathtt{h}\sim \mathcal{D} > rank(\mathtt{b_i})$ (for all
    $i$).  (2) for each ground instance of another program clause:
    $\mathtt{h} \coloneq \mathtt{b_1, \ldots b_n}$ holds
    $rank(\mathtt{h})\geq rank(\mathtt{b_i})$ (for all $i$).  (3) for
    each ground atom $\mathtt{b}$ that contains (the name of) a random
    variable $\mathtt{h}$, $rank(\mathtt{b}) \geq rank(\mathtt{h} \sim
    \mathcal{D})$ (with $\mathtt{h} \sim \mathcal{D}$ the head of the
    distribution clause defining $\mathtt{h}$).

  \item[(V3)] All ground probabilistic facts or, to be more precise,
    the corresponding indicator functions are
    \emph{Lebesgue-measurable}.

  \item[(V4)] Each atom in the least fixpoint can be derived from a
    finite number of probabilistic facts (\emph{finite support
      condition}~\cite{Sato:95}).
  \end{description}
\end{definition}

Together, (V1) and (V2) ensure that a single basic distribution $P_F$
over the probabilistic facts can be obtained from the distributions of
individual random variables defined in $P$.  The requirement (V3) is
crucial.  It ensures that the series of distributions $P_F^{(n)}$
needed to construct this basic distribution is well-defined.  Finally,
the number of facts over which the basic distribution is defined needs
to be countable. This is true, as we have a finite number of constants
and functors: those appearing in the program.

\subsection{Distribution Semantics}
\label{sec:distributionsemantics}

We now define the series of distributions $P_F^{(n)}$ where we fix an
enumeration $f_1,f_2,\ldots$ of probabilistic facts such that $i < j
\implies rank(f_i) \leq rank(f_j)$ where $rank(\cdot)$ is a
\emph{ranking function} showing that the program is
distribution-stratified.  For each predicate $\mathtt{rel/2}\in
dist\_rel$, we define an \emph{indicator function} as follows:
\begin{align}
  I^1_{rel}(X_1,X_2) & =   
  \begin{cases}
    1  & \text{if } \mathtt{rel}(X_1,X_2) \text{ is true} \\
    0 & \text{if } \mathtt{rel}(X_1,X_2)  \text{ is false}
  \end{cases}
\end{align}

Furthermore, we set $I^0_{rel}(X_1,X_2) = 1.0 - I^1_{rel}(X_1,X_2)$.
We then use the expected value of the indicator function to define
probability distributions $P_F^{(n)}$ over finite sets of ground facts
$f_1,\ldots,f_n$. Let $\{rv_1,\ldots rv_m\}$ be the set of random
variables these $n$ facts depend on, ordered such that if
$rank(rv_i)<rank(rv_j)$, then $i<j$ (cf.~(V2) in
Definition~\ref{def:validprog}).  Furthermore, let $f_i =
rel_i(t_{i1},t_{i2})$, $x_j\in\{1,0\}$, and $\theta^{-1} =
\{\val(rv_1)/V_1,\ldots, \val(rv_m)/V_m\}$. The latter replaces all
evaluations of random variables on which the $f_i$ depend by variables
for integration.
\begin{align}
\lefteqn{P_F^{(n)}(f_1 = x_1, \ldots , f_n = x_n)
= \mathbb{E} [
I^{x_1}_{rel_1}(t_{11},t_{12}), \ldots ,
I^{x_n}_{rel_n}(t_{n1},t_{n2}) ]}\label{eq:series}\\
& = \idotsint \left( I^{x_1}_{rel_1}(t_{11}\theta^{-1}, t_{12}\theta^{-1})\cdots
I^{x_n}_{rel_n}(t_{n1}\theta^{-1}, t_{n2})\theta^{-1}\right) d\mathcal{D}_{rv_1}(V_1)\cdots d\mathcal{D}_{rv_m}(V_m)     \nonumber
\end{align}
\begin{example}[Basic Distribution]
Let $f_1,f_2,\ldots = dist\_lt(\val(b1), 3), dist\_lt(\val(b2), \val(b1)), \ldots$. The second distribution in the
series then is
\begin{align*}
\lefteqn{P_F^{(2)}(dist\_lt(\val(b1),  3)=x_1 ,dist\_lt(\val(b2), \val(b1))=x_2)}\\
  &= \mathbb{E} [ I_{\mathtt{dist\_lt}}^{x_1} (\val(b1),3),I_{\mathtt{dist\_lt}}^{x_2} (\val(b2),\val(b1))]\\
  &= \int\int \left(I_{\mathtt{dist\_lt} }^{x_1} (V1,3) ,I_{\mathtt{dist\_lt} }^{x_2} (V2,V1) \right)d\mathcal{D}_{b1}(V1)d\mathcal{D}_{b2}(V2)
\end{align*} 
\end{example}

By now we are able to prove the following proposition.

\begin{proposition}
  \label{prop:adm}
  Let $P$ be a valid program.  $P$ defines a probability measure $P_P$
  over the set of fixpoints of the $T_P$ operator.  Hence, $P$ also
  defines for an arbitrary formula $\mathtt{q}$ over atoms in its
  Herbrand base the probability that $\mathtt{q}$ is true.
\end{proposition}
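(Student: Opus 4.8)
The plan is to reduce Proposition~\ref{prop:adm} to Sato's distribution semantics by verifying that the construction in Section~\ref{sec:distributionsemantics} produces exactly the ingredients Sato's theorem requires: a ground, denumerable split of the program into probabilistic facts $F$ and rules $R$ with no fact unifying a rule head, together with a probability distribution $P_F$ over $F$ built from a compatible series $P_F^{(n)}$ satisfying Equation~\eqref{eq:compat}. Once $P_F$ exists, Sato's result immediately lifts it to a measure $P_{DB}$ over least Herbrand models, and since (by distribution-stratification, (V2)) every such Herbrand model coincides with a fixpoint of the $T_P$ operator, this gives the desired $P_P$ over $T_P$-fixpoints; the probability of an arbitrary formula $\mathtt{q}$ is then $P_P(\{\omega : \omega \models \mathtt{q}\})$, provided that set is measurable.

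First I would establish that the $P_F^{(n)}$ of Equation~\eqref{eq:series} are genuine probability distributions on $\{0,1\}^n$: non-negativity and the fact that for fixed values $V_1,\ldots,V_m$ the indicator product, summed over all $x_1,\ldots,x_n\in\{0,1\}$, collapses to $1$ because $I^1_{rel}+I^0_{rel}=1$; pushing the finite sum through the integral (legitimate since everything is bounded by $1$ and, by (V3), the indicators are Lebesgue-measurable so the integrands are measurable) then yields $\sum_{x_1,\ldots,x_n} P_F^{(n)} = 1$. Here I would lean on (V1) to ensure each random variable $rv_j$ has a unique associated distribution $\mathcal{D}_{rv_j}$, so the iterated integral in~\eqref{eq:series} is unambiguous; and on (V2) to order the $rv_j$ consistently so that the nested integration is well-defined (outer variables may parametrize inner distributions, exactly as in Example~\ref{ex:probrules} where $\mathtt{nballs}$ is integrated before the $\mathtt{color}$ and $\mathtt{diameter}$ variables depending on it).

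Next I would verify the compatibility condition~\eqref{eq:compat}. Adding the fact $f_{n+1}$ either introduces no new random variables — in which case summing $P_F^{(n+1)}$ over $x_{n+1}$ uses $I^1_{rel_{n+1}}+I^0_{rel_{n+1}}=1$ pointwise and recovers the integrand of $P_F^{(n)}$ verbatim — or it introduces one or more new variables $rv_{m+1},\ldots$; these are appended \emph{after} $rv_1,\ldots,rv_m$ by the rank-consistent ordering of the enumeration $f_1,f_2,\ldots$ (which exists by (V2)), so summing out $x_{n+1}$ first collapses the new indicator to $1$, and then $\int 1\,d\mathcal{D}_{rv_{m+1}}(V_{m+1})\cdots = 1$ since each $\mathcal{D}_{rv}$ is a probability distribution, again leaving exactly $P_F^{(n)}$. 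With compatibility in hand, Kolmogorov's extension theorem (the form Sato invokes) produces the infinite distribution $P_F$ over $\prod_i\{0,1\}$, i.e.\ over truth-value assignments to all probabilistic facts.

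Finally I would invoke Sato's theorem: since $DB = F\cup R$ is ground, denumerable (finitely many constants and functors, as noted after Definition~\ref{def:validprog}), and no probabilistic fact unifies a rule head (imposed in Section~\ref{sec:syntax}), each truth assignment $\omega$ determines a unique least Herbrand model, and $P_F$ extends to $P_{DB}$ on the generated $\sigma$-algebra; (V4), the finite-support condition, is what Sato requires for this extension to behave well (each derived atom depends on finitely many facts, so $\{\omega : \omega\models \mathtt{q}\}$ for a single atom $\mathtt{q}$ is a cylinder set, hence measurable, and finite Boolean combinations stay measurable). Identifying least Herbrand models of $DB$ with fixpoints of $T_P$ gives $P_P$, and measurability of $\{\omega : \omega \models \mathtt{q}\}$ for arbitrary first-order formulas $\mathtt{q}$ over the Herbrand base follows since such a set is a countable Boolean combination of the atomic cylinder sets. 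I expect the main obstacle to be the compatibility argument in the case where $f_{n+1}$ introduces fresh random variables: one has to be careful that the enumeration of facts can indeed be chosen so that the induced ordering on random variables is consistent with \emph{both} the fact-ordering and the rank function, and that the nested-integral form of $P_F^{(n+1)}$ literally extends that of $P_F^{(n)}$ by appending integrations on the right — this is where (V2) does its real work and where a sloppy treatment would break the telescoping.
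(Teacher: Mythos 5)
Your proof follows essentially the same route as the paper's own (very brief) proof sketch: reduce to Sato's distribution semantics by showing that the series $P_F^{(n)}$ of Equation~\eqref{eq:series} consists of well-defined probability distributions satisfying the compatibility condition~\eqref{eq:compat}, using $I^1_{rel}+I^0_{rel}=1$ together with the measurability required by (V3) and the rank-consistent ordering guaranteed by (V2). Your version simply supplies the details the paper leaves implicit --- in particular the telescoping argument for compatibility when $f_{n+1}$ introduces fresh random variables, and the final lift via Sato's theorem to a measure over least Herbrand models identified with $T_P$-fixpoints --- and is correct.
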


\begin{proof}[Proof sketch] 
  It suffices to show that the series of distributions $P_F^{(n)}$
  over facts (cf.~\eqref{eq:series}) is of the form that is
  required in the distribution semantics, that is, these are
  well-defined probability distributions that satisfy the
  compatibility condition, cf.~\eqref{eq:compat}. This is a direct
  consequence of the definition in terms of indicator functions and
  the measurability of the underlying facts required for valid
  programs.
\end{proof}

\subsection{$T_P$ Semantics}
\label{sec:tp}

In the following, we give a procedural view onto the semantics by
extending the $T_P$ operator of Definition~\ref{def:tpop} to deal with
probabilistic facts $\mathtt{dist\_rel(t_1,t_2)}$. To do so, we
introduce a function \textsc{ReadTable}$(\cdot)$ that keeps track of
the sampled values of random variables to evaluate probabilistic
facts. This is required because interpretations of a program only
contain such probabilistic facts, but not the underlying outcomes of
random variables. Given a probabilistic fact
$\mathtt{dist\_rel(t1,t2)}$, \textsc{ReadTable} returns the truth
value of the fact based on the values of the random variables $h$ in
the arguments, which are either retrieved from the table or sampled
according to their definition $\mathtt{h}\sim \mathcal{D}$ as included
in the interpretation and stored in case they are not yet available.

\begin{definition}[Stochastic $T_P$ operator]
  \label{def:stp}
  Let $P$ be a valid program and $ground(P)$ the set of all ground
  instances of clauses in $P$.  Starting from a set of ground facts
  $S$ the $ST_P$ operator returns
  \begin{align*}
    ST_P(S) := \Big\{ \mathtt{h}\ \Big|\ &
       \mathtt{h \coloneq b_1,\ldots,b_n} \in ground(P) \mbox{ and }
       \forall\ \mathtt{b_i}: \mbox{ either } \mathtt{b_i}\in S \mbox{
         or }\\
       & \big(\mathtt{b_i}=dist\_rel(t1,t2) \wedge
       (t_j=\val(h)\rightarrow (\mathtt{h}\sim\mathcal{D})\in S) \wedge\\
    & \ \mbox{\textsc{ReadTable}}(b_i)=true  \big) \Big\}
\end{align*}
\end{definition}

\textsc{ReadTable} ensures that the basic facts are sampled from their
joint distribution as defined in Sec.~\ref{sec:distributionsemantics} during the
construction of the standard fixpoint of the logic program. Thus, each
fixpoint of the $ST_P$ operator corresponds to a possible
world whose probability is given by the distribution semantics.

\section{Forward sampling using Magic Sets and backward reasoning}
\label{sec:algorithms} 

In this section we introduce our new method for probabilistic forward
inference.  To this aim, we first extend the magic set transformation
to distributional clauses. We then develop a rejection sampling scheme
using this transformation. This scheme also incorporates backward
reasoning to check for consistency with the evidence during sampling
and thus to reduce the rejection rate.

\subsection{Probabilistic magic set transformation}
\label{sec:magic}

The disadvantage of forward reasoning in logic programming is that the
search is not goal-driven, which might generate irrelevant atoms.  The
\emph{magic set} transformation~\cite{bancilhon,nilsson:book} focuses
forward reasoning in logic programs towards a goal to avoid the
generation of uninteresting facts. It thus combines the advantages of
both reasoning directions.

\begin{definition}[Magic Set Transformation]
  \label{def:magicsets}
  If $P$ is a logic program, then we use $\textsc{Magic}(P)$ to denote the
  smallest program such that if $\mathtt{A_0 \coloneq A_1,\ldots, A_n}
  \in P$ then
  \begin{itemize}
  \item $\mathtt{A_0 \coloneq c(A_0), A_1, \ldots, A_n }\in \textsc{Magic}(P)$ and
  \item for each $1\le i \le n$: $\mathtt{c(A_i) \coloneq c(A_0),  A_1,\ldots, A_{i-1} }\in \textsc{Magic}(P)$
  \end{itemize}
\end{definition}

The meaning of the additional $\mathtt{c/1}$ atoms (c=call) is that
they ``switch on'' clauses when they are needed to prove a particular
goal. If the corresponding switch for the head atom is not true, the
body is not true and thus cannot be proven.  The magic transformation
is both sound and complete. Furthermore, if the SLD-tree of a goal is
finite, forward reasoning in the transformed program terminates. The
same holds if forward reasoning on the original program terminates.

We now extend this transformation to distributional clauses. The idea
is that the distributional clause for a random variable $h$ is
activated  when there is a call to a probabilistic fact
$dist\_rel(t_1, t_2)$ depending on $h$.

\begin{definition}[Probabilistic Magic Set Transformation]
  \label{def:pmagicsets}
  For program $P$, let $P_L$ be $P$ without distributional clauses.
  $\textsc{M}(P)$ is the smallest program s.t.  $\textsc{Magic}(P_L)
  \subseteq \textsc{M}(P)$ and for each $\mathtt{h} \sim \mathcal{D}
  \mathtt{ \coloneq b_1,\ldots,b_n} \in P$ and $\mathtt{rel} \in
  \{\mathtt{eq, lt, leq, gt, geq}\}$:
  \begin{itemize}
  \item $\mathtt{h} \sim \mathcal{D}\coloneq
    \mathtt{(c(dist\_rel(\val(h), X)); c(dist\_rel(X ,
      \val(h))),b_1,\ldots,b_n.} \in \textsc{M}(P)$.
  \item $\mathtt{c(b_i) \coloneq (c(dist\_rel(\val(h) , X));
      c(dist\_rel(X, \val(h))), b_1,\ldots, b_{i-1}. } \in \textsc{M}(P)$.
  \end{itemize}
   Then $\textsc{PMagic}(P)$ consists of:
  \begin{itemize}
  \item a clause $\mathtt{a\_p(t_1,\ldots,t_n)} \coloneq
    \mathtt{c(p(t_1,\ldots,t_n)), p(t_1,\ldots,t_n)}$ for each
    built-in predicate (including $\mathtt{dist\_rel/2}$ for
    $\mathtt{rel} \in \{\mathtt{eq, lt, leq, gt, geq} \}$) used in
    $\textsc{M}(P)$.

 \item a clause $\mathtt{h} \coloneq
    \mathtt{b_1',\ldots,b_n'}$ for each clause  $\mathtt{h} \coloneq \mathtt{b_1,\ldots,b_n} \in
    \textsc{M}(P)$ where $\mathtt{b_i'=a\_b_i}$ if $\mathtt{b_i}$ uses a built-in
    predicate and else $\mathtt{b_i'=b_i}$.
  \end{itemize}
\end{definition}

Note that every call to a built-in $\mathtt{b}$ is replaced by a call
to $\mathtt{a\_b}$; the latter predicate is defined by a clause that
is activated when there is a call to the built-in ($\mathtt{c(b)}$)
and that effectively calls the built-in.  The transformed program
computes the distributions only for random variables whose value is
relevant to the query. These distributions are the same as those
obtained in a forward computation of the original program. Hence we
can show:

\begin{lemma}
 Let $P$ be a program and $\textsc{PMagic}(P)$ its probabilistic
  magic set transformation extended with a seed $c(q)$.  The
  distribution over $q$ defined by $P$ and by $\textsc{PMagic}(P)$ is
  the same.
\end{lemma}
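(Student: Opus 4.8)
The plan is to reduce the claim to the soundness and completeness of the classical magic set transformation (Definition~\ref{def:magicsets}, already available to us) plus an argument that the probabilistic structure — the distributions attached to random variables — is preserved. I would split the program into its logical part $P_L$ and its distributional clauses. On $P_L$, the standard magic set result already tells us that $\textsc{Magic}(P_L)$ seeded with $c(q)$ derives exactly those atoms of the least Herbrand model of $P_L$ that are relevant to $q$, and in particular that the set of $\mathtt{dist\_rel}$ atoms (as queries, via the $c(\cdot)$ atoms) that get ``called'' is exactly the set that a backward SLD computation from $q$ would call. The key new observation is then that in $\textsc{PMagic}(P)$ a distributional clause $\mathtt{h}\sim\mathcal{D}\coloneq \mathtt{b_1,\ldots,b_n}$ fires precisely when some call $c(dist\_rel(\val(h),X))$ or $c(dist\_rel(X,\val(h)))$ is present, i.e. precisely when the random variable $h$ is relevant; and when it fires, it produces the same ground head $\mathtt{h}\sim\mathcal{D}$, with the same $\mathcal{D}$, that the original forward computation would produce. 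Validity condition (V1) guarantees this $\mathcal{D}$ is unique, and (V2) — distribution-stratification — guarantees the whole process is well-founded, so no circularity arises between ``relevance of $h$'' and ``firing of $h$'s clause.''

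The steps, in order, would be: (1) Observe that the $c(\cdot)$-atoms and the $\mathtt{a\_p}$ rewriting in $\textsc{PMagic}(P)$ do not change which ordinary (non-$c$, non-$\mathtt{a\_}$) atoms are derivable among those reachable from the seed — this is exactly the content of soundness and completeness of $\textsc{Magic}$, lifted to the distributional clauses by treating each $c(dist\_rel(\val(h),X))$ disjunct as just another ``magic'' trigger. (2) Argue by induction on $rank(\cdot)$ (from (V2)) that for every ground random variable $h$ with $rank(\mathtt{h}\sim\mathcal{D})\le k$, the fact $\mathtt{h}\sim\mathcal{D}$ is in the $ST_P$-fixpoint of $\textsc{PMagic}(P)$ seeded with $c(q)$ iff some probabilistic fact $dist\_rel(t_1,t_2)$ depending on $h$ is called during the computation from $q$; the base case is $rank=0$ atoms that depend on no random variable, and the inductive step uses that the body atoms of $h$'s defining clause have strictly smaller rank and that calls to them are generated by the second bullet of the distributional part of Definition~\ref{def:pmagicsets}. (3) Conclude that the set of random variables whose distributions are instantiated by $\textsc{PMagic}(P)$ is exactly $\{rv_1,\ldots,rv_m\}$, the variables on which the probabilistic facts relevant to $q$ depend (in the sense of~\eqref{eq:series}), and that each is instantiated with the same $\mathcal{D}_{rv_j}$. (4) Since the truth of $q$ is, by Proposition~\ref{prop:adm} and the construction in Section~\ref{sec:distributionsemantics}, a function only of the values of those relevant probabilistic facts, and those in turn are functions only of $V_1,\ldots,V_m$ integrated against the same product measure $d\mathcal{D}_{rv_1}\cdots d\mathcal{D}_{rv_m}$ in both $P$ and $\textsc{PMagic}(P)$, the induced distribution over $q$ coincides.

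The main obstacle I expect is step~(2): making precise the correspondence between ``a probabilistic fact $dist\_rel(t_1,t_2)$ is \emph{called}'' in the magic-transformed forward computation and ``$h$ is relevant to $q$'' in the backward/proof-theoretic sense, and showing this correspondence is not disturbed by the interleaving of sampling (via \textsc{ReadTable}) with the fixpoint construction. One has to check that a call to $c(dist\_rel(\val(h),X))$ can arise \emph{only} through the magic clauses generated from clauses that (transitively) have $dist\_rel(\ldots,\val(h),\ldots)$ or an atom containing $\val(h)$ in a body — i.e. that the transformation does not spuriously activate irrelevant distributional clauses — and, conversely, that every genuinely needed one is activated; distribution-stratification (V2) together with the finite-support condition (V4) is what keeps this induction well-founded and the relevant set finite. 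The remaining steps are essentially bookkeeping on top of the already-cited correctness of $\textsc{Magic}$ and of the distribution semantics.
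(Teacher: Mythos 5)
Your proposal follows essentially the same route as the paper's own (sketched) proof: reduce to soundness and completeness of the classical magic set transformation, observe that a distributional clause is activated exactly when a probabilistic fact depending on its random variable is called, and conclude that the same relevant random variables receive the same distributions, hence the same distribution over $q$. Your rank-based induction and the explicit identification of the integration variables $V_1,\ldots,V_m$ merely flesh out details the paper leaves implicit, so the argument is correct and matches the paper's approach.
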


\begin{proof}[Proof sketch]
  In both programs, the distribution over $q$ is determined by the
  distributions of the atoms $dist\_eq(t_1,t_2)$,
  $dist\_leq(t_1,t_2)$, $dist\_lt(t_1,t_2)$, $dist\_geq(t_1,t_2)$, and
  $dist\_gt(t_1,t_2)$ on which $q$ depends in a forward computation of
  the program $P$. The magic set transformation ensures that these
  atoms are called in the forward execution of $\textsc{PMagic}(P)$.
  In $\textsc{PMagic}(P)$, a call to such an atom activates the
  distributional clause for the involved random variable. As this
  distributional clause is a logic program clause, soundness and
  completeness of the magic set transformation ensures that the
  distribution obtained for that random variable is the same as in
  $P$.  Hence also the distribution over $q$ is the same for both
  programs.
\end{proof}

\subsection{Rejection sampling with heuristic lookahead}

As discussed in Section~\ref{sec:probinf}, sampling-based approaches
to probabilistic inference estimate the conditional probability
$p(q|e)$ of a query $q$ given evidence $e$ by randomly generating a
large number of samples or possible
worlds~(cf.~Algorithm~\ref{alg:sampling}). The algorithm starts by
preparing the program $L$ for sampling by applying the \textsc{PMagic}
transformation.  In the following, we discuss our choice of subroutine
\textsc{STPMagic} (cf.~Algorithm~\ref{alg:rejection}) which realizes
likelihood weighted sampling.  It is used in
Algorithm~\ref{alg:sampling}, line~\ref{line:callWeightedsample}, to
generate individual samples. It iterates the stochastic consequence
operator of Definition~\ref{def:stp} until either a fixpoint is
reached or the current sample is inconsistent with the
evidence. Finally, if the sample is inconsistent with the evidence, it
receives weight 0.

\begin{algorithm}[t]
  \caption{Main loop for sampling-based inference to calculate the
    conditional probability $p(q|e)$ for query $q$, evidence $e$ and
    program $L$. }
  \label{alg:sampling}
\begin{algorithmic}[1]
\Function{\textsc{Evaluate}}{$L$, $q$, $e$, $Depth$}
\State $L^*:=$\textsc{PMagic}$(L)\cup \{c(a) | a \in e \cup{q} \}$
\State $n^+ :=0$ \hspace{1cm} $n^- := 0$
\While{Not converged}
\State $(I,w):=$\textsc{STPMagic}$(L^*,L,e,Depth)$\label{line:callWeightedsample}
\State \textbf{if} $q\in I$ \textbf{then}  $n^+:= n^+ + w$
\textbf{else}  $n^-:= n^- + w$ 
\EndWhile
\State \Return $n^+ / (n^+ + n^-)$
\EndFunction
\end{algorithmic}
\end{algorithm}

\begin{algorithm}[t]
  \caption{Sampling one interpretation; used in
    Algorithm~\ref{alg:sampling}. }
  \label{alg:rejection}
  \begin{algorithmic}[1]
    \Function{STPMagic}{$L^*,L,e,Depth$}
    \State $T_{pf}:=\varnothing$, $T_{dis}:=\varnothing$, $w:=1$, $I_{old}:=\varnothing$, $I_{new}:=\varnothing$
    \Repeat 
\State $I_{old}:=I_{new}$
\ForAll{$(\mathtt{h \coloneq \
        body)}\ \in L^*$}
\State split body in $B_{PF}$ (prob. facts) and $B_L$ (the rest)
\ForAll{grounding substitution $\theta$ such that $B_L\theta \subseteq I_{old}$}
\State $s := true$, $w_d := 1$
\While {$s \wedge B_{PF}\neq \varnothing$}
\State select and remove $pf$ from $B_{PF}$
\State $(b_{pf},w_{pf}) := $\textsc{ReadTable}$(pf\theta, I_{old},
T_{pf}, T_{dis}, L, e, Depth)$ \label{line:samplehead} 
\State $s := s  \wedge b_{pf}$ \hspace{1cm}  $w_d := w_d \cdot w_{pf}$
\EndWhile
\If {$s$} \If {$h\theta \in e^-$} \Return $(I_{new}, 0)$
\Comment{check negative evidence }\EndIf
\State $I_{new} := I_{new} \cup \{h\theta\}$ \hspace{1cm} $w := w \cdot w_d$
\EndIf
\EndFor
\EndFor
     \Until {$I_{new}=I_{old} \vee w=0$} \Comment{Fixpoint or
      impossible evidence}
\If {$e^+ \subseteq I_{new}$} \Return $(I_{new}, w)$ \Comment{check
  positive evidence} \Else \ \Return $(I_{new}, 0)$\EndIf
       \EndFunction
\end{algorithmic}
\end{algorithm}

\begin{algorithm}[ht]
  \caption{Evaluating a probabilistic fact $pf$; used in
    Algorithm~\ref{alg:rejection}. \textsc{ComputePF}$(pf,T_{dis})$
    computes the truth value and the probability of $pf$ according to
    the information in $T_{dis}$.}
  \label{alg:likelihoodweighting}
  \begin{algorithmic}[1]
    \Function{ReadTable}{$pf, I,T_{pf}, T_{dis}, L, e,Depth$}
\If{$pf \notin T_{pf}$}
\ForAll {random variable $h$ occurring in $pf$ where $h \notin T_{dis}$}
\If{$h \sim D \notin I $}  \Return $(false,0)$  \EndIf
\If {not \Call{Sample}{$h, D,  T_{dis}, I, L, e, Depth$}}  \Return
$(false,0)$ \label{line:sample} \EndIf
\EndFor
\State $(b,w) :=$ \Call{ComputePF}{$pf$,$T_{dis}$} 
\If {$(b \wedge (pf \in e^-)) \vee (\neg b \wedge (pf \in e^+)) $}\State\Return $(false,0)$ \Comment{inconsistent with evidence}\EndIf
\State extend $T_{pf}$ with $(pf,b,w)$
\EndIf
\State\Return $(b,w)$ as stored in $T_{pf}$ for $pf$
    \EndFunction
  \Procedure{Sample}{$\mathtt{h}, \mathcal{D},\ T_{dis}, \ I, \ L,\ e,\
    Depth$}
  \State $w_h:=1$, $\mathcal{D}' := \mathcal{D}$   \Comment{Initial
    weight, temp. distribution}
  \If{$\mathcal{D}' = [p_1:\mathtt{a_1} ,\ldots , p_n:\mathtt{a_n}]$} \Comment{finite distribution}
 \For{$p_j:\mathtt{a_j} \in \mathcal{D}'$ where $\mathtt{dist\_eq}(\mathtt{h},\mathtt{a_j}) \in e^-$} \Comment{remove neg.~evidence}
  \State $\mathcal{D}' := \Call{Norm}{\mathcal{D}' \setminus
  \{p_j:\mathtt{a_j}\}}$, \hspace{1cm}$w_h:=w_h\times (1-p_j)$
 \EndFor
\If {$\exists v: \mathtt{dist\_eq(\val(h),v)} \in e^+ $ and $p:\mathtt{v} \in \mathcal{D}'$}
\State $\mathcal{D}' := [1 : v]$, \hspace{1cm} $w_h:=w_h\times
p$ \EndIf

 \For{$p_j:\mathtt{a_j} \in \mathcal{D}'$} \Comment{remove choices that make
   $e^+$ impossible}
 \If{$\exists b\in e^+$: not \Call{MaybeProof}{$b,Depth , I \cup
     \lbrace \mathtt{dist\_eq}(\mathtt{h},\mathtt{a_j}) \rbrace,L$}  \textbf{or}\\
 \hspace{1.8cm} $\exists b\in e^-$: not \Call{MaybeFail}{$b,Depth , I \cup
     \lbrace \mathtt{dist\_eq}(\mathtt{h},\mathtt{a_j}) \rbrace,L$}
}
  \State $\mathcal{D}' := \Call{Norm}{\mathcal{D}' \setminus
  \{p_j:\mathtt{a_j}\}}$, \hspace{1cm}$w_h:=w_h\times (1-p_j)$
\EndIf
 \EndFor
\EndIf
 \State \textbf{if} {$\mathcal{D}'=\varnothing$} \Return false
 \State Sample $x$ according to $\mathcal{D'}$, extend $T_{dis}$ with $(h,x)$ and \Return true
\EndProcedure
  \end{algorithmic}
\end{algorithm}

Algorithm~\ref{alg:likelihoodweighting} details the procedure used in
line~\ref{line:samplehead} of Algorithm~\ref{alg:rejection} to sample
from a given distributional clause. The function \textsc{ReadTable}
returns the truth value of the probabilistic fact, together with its
weight. If the outcome is not yet tabled, it is computed. Note that
\texttt{false} is returned when the outcome is not consistent with the
evidence. Involved distributions, if not yet tabled, are sampled in
line~\ref{line:sample}. In the infinite case, \textsc{Sample} simply
returns the sampled value. In the finite case, it is directed towards
generating samples that are consistent with the evidence. Firstly, all
possible choices that are inconsistent with the negative evidence are
removed. Secondly, when there is positive evidence for a particular
value, only that value is left in the distribution. Thirdly, it is
checked whether each left value is consistent with all other
evidence. This consistency check is performed by a simple
depth-bounded meta-interpreter. For positive evidence, it attempts a
top-down proof of the evidence atom in the original program using the
function \textsc{MaybeProof}. Subgoals for which the depth-bound is
reached, as well as probabilistic facts that are not yet tabled are
assumed to succeed. If this results in a proof, the value is
consistent, otherwise it is removed. Similarly for negative evidence:
in \textsc{MaybeFail}, subgoals for which the depth-bound is reached,
as well as probabilistic facts that are not yet tabled are assumed to
fail. If this results in failure, the value is consistent, otherwise
it is removed. The $Depth$ parameter allows one to trade the
computational cost associated with this consistency check for a
reduced rejection rate.

Note that the modified distribution is normalized and the weight is
adjusted in each of these three cases. The weight adjustment takes
into account that removed elements cannot be sampled and is necessary
as it can depend on the distributions sampled so far which elements
are removed from the distribution sampled in \textsc{Sample} (the
clause bodies of the distribution clause are instantiating the
distribution).

\section{Experiments}
\label{sec:experiments}

We implemented our algorithm in YAP Prolog and set up experiments to
answer the questions
\begin{itemize}
\item[\textbf{Q1}] Does the lookahead-based sampling improve the
  performance?
\item[\textbf{Q2}] How do rejection sampling and likelihood weighting compare?
\end{itemize}

To answer the first question, we used the distributional program in
Figure~\ref{fig:experimentnogreen}, which models an urn containing a
random number of balls. The number of balls is uniformly distributed
between 1 and 10 and each ball is either red or green with equal
probability. We draw 8 times a ball with replacement from the urn and
observe its color. We also define the atom
$\mathtt{nogreen(}D\mathtt{)}$ to be true if and only if we did not
draw any green ball in draw 1 to $D$.  We evaluated the query
$P(\mathtt{dist\_eq(\simeq(color(\simeq(drawnball(1)))),red)}\
|\mathtt{nogreen(}D\mathtt{)})$ for $D=1,2,\ldots, 8$.  Note that the
evidence implies that the first drawn ball is red, hence that the
probability of the query is 1; however, the number of steps required
to proof that the evidence is inconsistent with drawing a green first
ball increases with D, so the larger is D, the larger Depth is
required to reach a 100\% acceptance rate for the sample as
illustrated in Figure ~\ref{fig:experimentnogreen}.  It is clear that
by increasing the depth limit, each sample will take longer to be
generated. Thus, the $Depth$ parameter allows one to trade off
convergence speed of the sampling and the time each sample needs to be
generated. Depending on the program, the query, and the evidence there
is an optimal depth for the lookahead.

\begin{figure}[ht]
  \centering
  \begin{minipage}{0.40\textwidth}
    \begin{align*}
      &{ }\\
      &{ }\\
      &{ }\\
      &\mathtt{numballs \sim uniform([1,2,3,4,5,6,7,8,9,10]).}\\
      &\mathtt{ball(M) \coloneq between(1,numballs,M).}\\
      &\mathtt{color(B)\sim uniform([red,green]) \coloneq ball(B).}\\
      &\mathtt{draw(N) \coloneq between(1,8,N).}\\
      &\mathtt{nogreen(0).}\\
    \end{align*}
  \end{minipage}
  \begin{minipage}{0.42\textwidth}
    \includegraphics[scale=0.44]{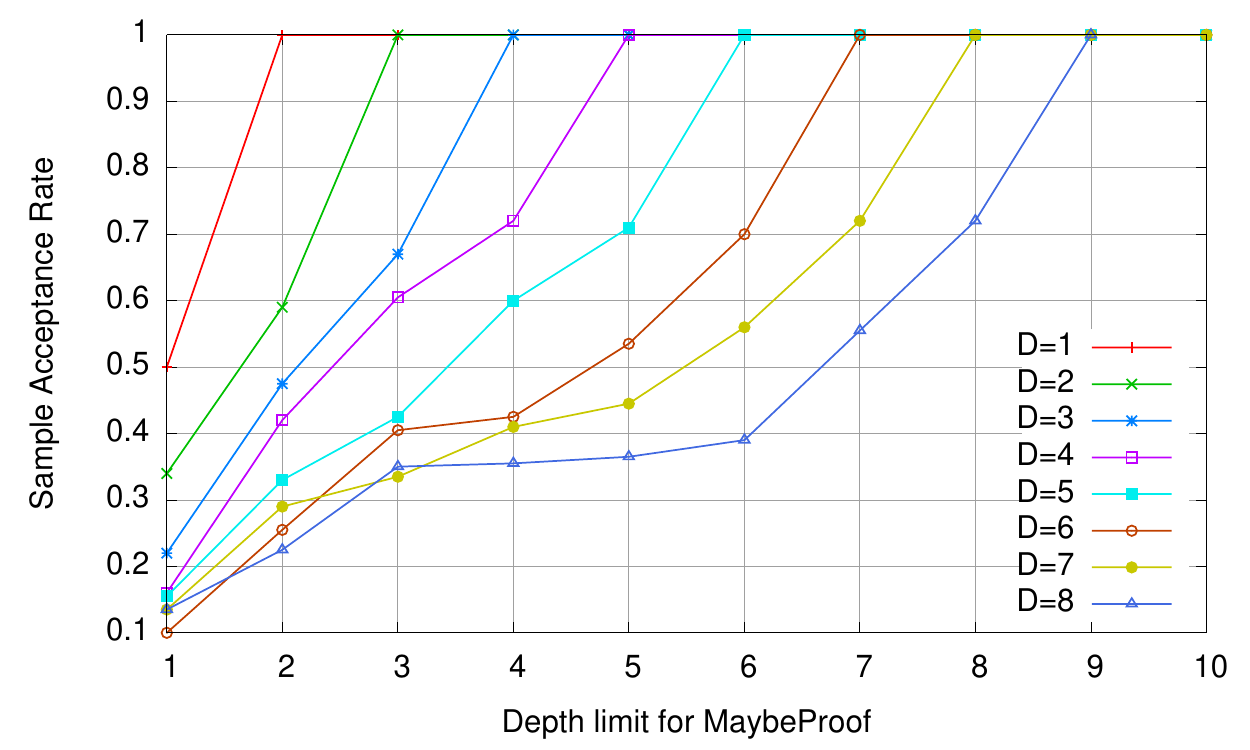}
  \end{minipage}
  \vspace{-0.5cm}
  \begin{align*}
    &\mathtt{nogreen(D) \coloneq
      dist\_eq(\simeq(color(\simeq(drawnball(D)))),red),\ D2\ is\
      D-1,\
      nogreen(D2).}\\
    &\mathtt{drawnball(D) \sim uniform(L) \coloneq
      draw(D),}\mathtt{findall(B,ball(B),L).}
  \end{align*}
  \caption{ The program modeling the urn (left); rate of accepted
    samples (right) for evaluating the query
    $P(\mathtt{dist\_eq(\simeq(color(\simeq(drawnball(1)))),red)}\ |$
    $\mathtt{nogreen(}D\mathtt{)})$ for $D=1,2,\ldots, 10$ and for
    $Depth=1,2,\ldots,8$ using Algorithm~\ref{alg:sampling}. The
    acceptance rate is calculated by generating 200 samples using our
    algorithm and counting the number of sample with weight larger
    than 0.}
  \label{fig:experimentnogreen}
\end{figure}

To answer \textbf{Q2}, we used the standard example for
BLOG~\cite{Milch05}.  An urn contains an unknown number of balls where
every ball can be either green or blue with $p=0.5$. When drawing a
ball from the urn, we observe its color but do not know which ball it
is.  When we observe the color of a particular ball, there is a $20\%$
chance to observe the wrong one, e.g. green instead of blue. We have
some prior belief over the number of balls in the urn. If 10 balls are
drawn with replacement from the urn and we saw 10 times the color
green, what is the probability that there are $n$ balls in the urn? We
consider two different prior distributions: in the first case, the
number of balls is uniformly distributed between 1 and 8, in the
second case, it is Poisson-distributed with mean $\lambda=6$.

\begin{figure}[t]
  \centering
  \includegraphics[scale=0.49]{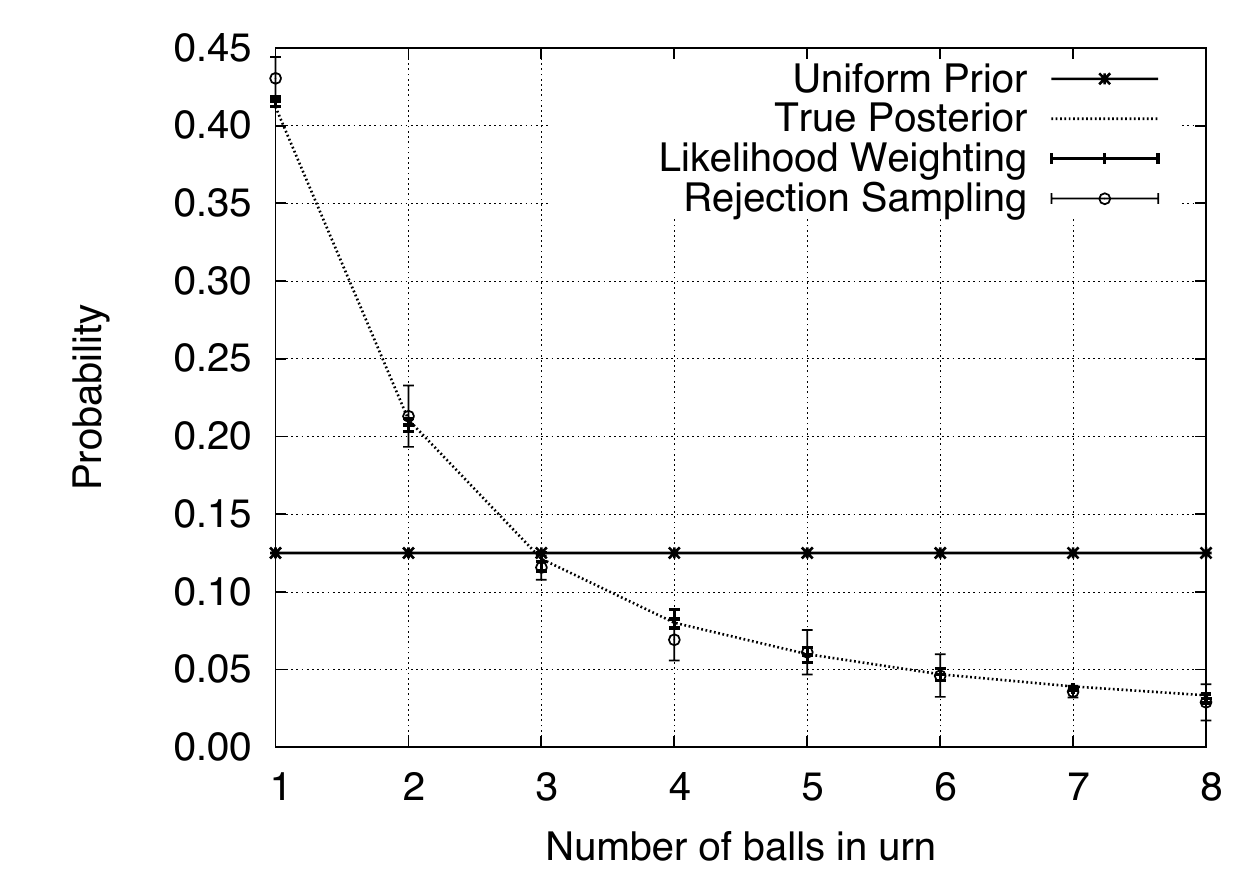}
  \includegraphics[scale=0.49]{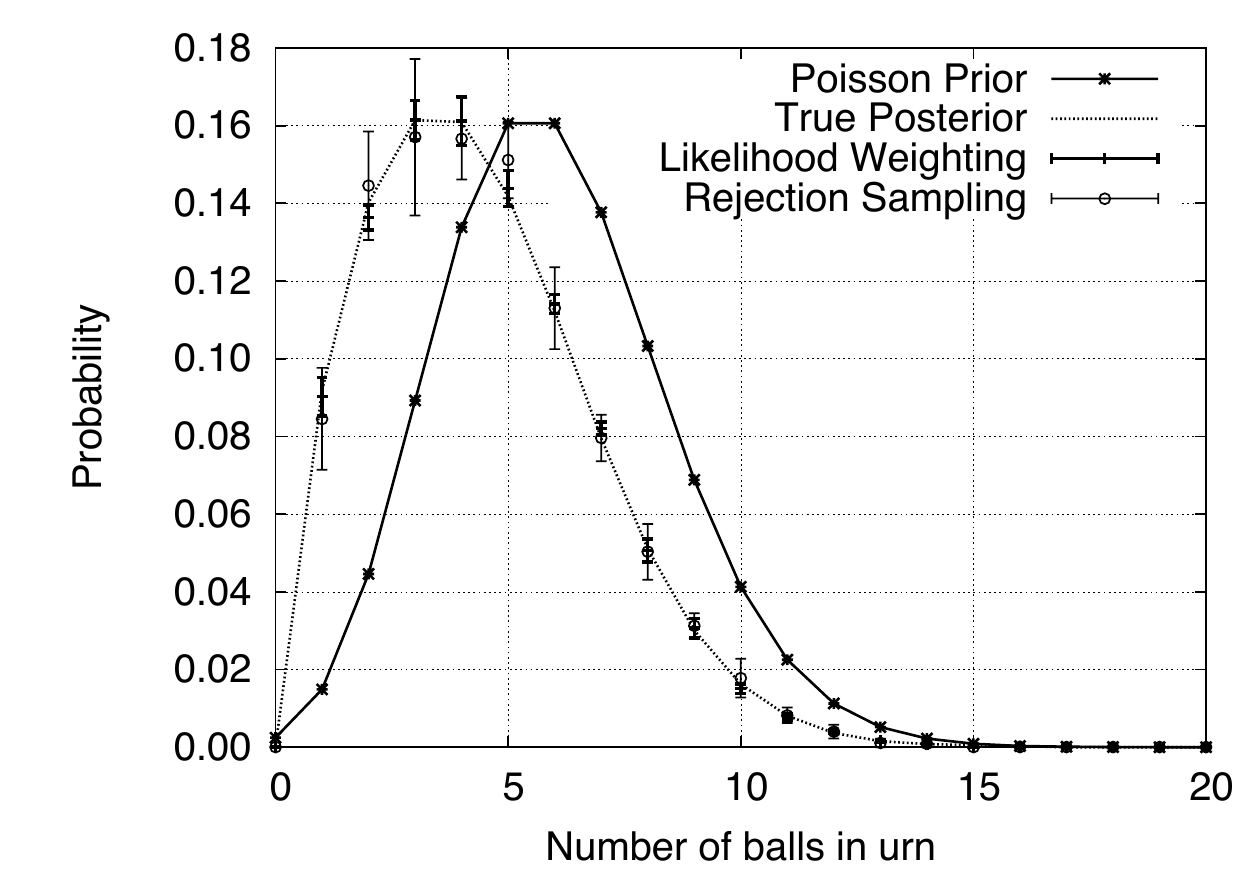}
  \caption{Results of urn experiment with forward reasoning. 10 balls
    with replacement were drawn and each time green was
    observed. Left: Uniform prior over \# balls, right: Poisson prior
    $(\lambda=6)$.}
  \label{fig:urnproblog}
\end{figure}

One run of the experiment corresponds to sampling the number $N$ of
balls in the urn, the color for each of the $N$ balls, and for each of
the ten draws both the ball drawn and whether or not the color is
observed correctly in this draw. Once these values are fixed, the
sequence of colors observed is determined. This implies that for a
fixed number $N$ of balls, there are $2^N\cdot N^{10}$ possible
proofs.  In case of the uniform distribution, exact PRISM inference
can be used to calculate the probability for each given number of
balls, with a total runtime of $0.16$ seconds for all eight cases. In
the case of the Poisson distribution, this is only possible up to 13
balls, with more balls, PRISM runs out of memory.  For inference using
sampling, we generate 20,000 samples with the uniform prior, and
100,000 with Poisson prior. We report average results over five
repetitions.  For these priors, PRISM generates 8,015 and 7,507
samples per second respectively, ProbLog backward sampling 708 and
510, BLOG 3,008 and 2,900, and our new forward sampling (with
rejection sampling) 760 and 731.  The results using our algorithm for
both rejection sampling and likelihood weighting with $Depth=0$ are
shown in Figure~\ref{fig:urnproblog}. As the graphs show, the standard
deviation for rejection sampling is much larger than for likelihood weighting.

\section{Conclusions and related work}
\label{sec:conclusions}

We have contributed a novel construct for probabilistic logic
programming, the distributional clauses, and defined its semantics.
Distributional clauses allow one to represent continuous variables and
to reason about an unknown number of objects. In this regard this
construct is similar in spirit to languages such as BLOG and Church,
but it is strongly embedded in a logic programming context.  This
embedding allowed us to propose also a novel inference method based on
a combination of importance sampling and forward reasoning. This
contrasts with the majority of probabilistic logic programming
languages which are based on backward reasoning (possibly enhanced
with tabling \cite{SatoKameya:01,Mantadelis10iclp}).  Furthermore,
only few of these techniques employ sampling, but see \cite{Kimmig11}
for a Monte Carlo approach using backward reasoning.  Another key
difference with the existing probabilistic logic programming
approaches is that the described inference method can handle
evidence. This is due to the magic set transformation that targets the
generative process towards the query and evidence and instantiates
only relevant random variables.  

P-log~\cite{Baral09} is a probabilistic language based on Answer Set
Prolog (ASP). It uses a standard ASP solver for inference and thus is
based on forward reasoning, but without the use of sampling.  Magic
sets are also used in probabilistic Datalog~\cite{Fuhr00}, as well as
in Dyna, a probabilistic logic programming language \cite{Eisner05}
based on rewrite rules that uses forward reasoning. However, neither
of them uses sampling. Furthermore, Dyna and PRISM require that the
exclusive-explanation assumption.  This assumption states that no two
different proofs for the same goal can be true simultaneously, that
is, they have to rely on at least one basic random variable with
different outcome. Distributional clauses (and the ProbLog language)
do not impose such a restriction.  Other related work includes
MCMC-based sampling algorithms such as the approach for
SLP~\cite{cussensmc}.  Church's inference algorithm is based on MCMC
too, and also BLOG is able to employ MCMC.  At least for BLOG it seems
to be required to define a domain-specific proposal distribution for
fast convergence.  With regard to future work, it would be interesting
to consider evidence on continuous distributions as it is currently
restricted to finite distribution. Program analysis and transformation
techniques to further optimize the program w.r.t. the evidence and
query could be used to increase the sampling speed.  Finally, the
implementation could be optimized by memoizing some information from
previous runs and then use it to more rapidly prune as well as sample.

\section*{Acknowledgements} 

Angelika Kimmig and Bernd Gutmann are supported by the Research
Foundation-Flanders (FWO-Vlaanderen). This work is supported by the
GOA project 2008/08 Probabilistic Logic Learning and by the European
Community's Seventh Framework Programme under grant agreement
First-MM-248258.

\newpage 
 
\bibliographystyle{acmtrans}
\bibliography{bibtex}

\end{document}